\numberwithin{equation}{section}
\newtheorem{thm}{Theorem}[section]
\newtheorem{prop}[thm]{Proposition}
\newtheorem{lemm}[thm]{Lemma}
\newtheorem{cor}[thm]{Corollary}
\newtheorem{defn}[thm]{Definition}
\newcommand{\R}{\mathbb{R}}
\newcommand{\bq}{\begin{equation}}
\newcommand{\eq}{\end{equation}}
\begin{document}

\title{ON THE SOLVABILITY OF EULER GRAPHENE BEAM SUBJECT TO AXIAL COMPRESSIVE LOAD}

\author{Mohamed B. Elgindi
\thanks{ M. B. Elgindi is  with Texas A\&M at Qatar, Doha, Qatar, Email: Mohamad.elgindi@qatar.tamu.edu.}, \ Dongming Wei
\thanks{D. Wei is with the Dept. of Math, Nazarbayev Univesity, Astana, Kazakhstan, Email:dongming.wei@nu.edu.kz.}
and Tarek M. Elgindi}
\maketitle

\begin{abstract} In this paper we formulate the equilibrium equation for a beam made of graphene  subjected to some boundary conditions and acted upon by axial compression and nonlinear lateral constrains as a fourth-order nonlinear boundary value problem.  We first study the nonlinear eigenvalue problem for buckling analysis of the beam. We show the solvability of the eigenvalue problem as an asymptotic expansion in a ratio of the elastoplastic parameters.  We verify that the spectrum is a closed set bounded away from zero and contains a discrete infinite sequence of eigenvalues.   In particular, we prove the existence of a minimal eigenvalue  $\lambda^*$ for the graphene beam corresponding to a Lipschitz continuous eigenfunction, providing a lower bound for the critical buckling load of the graphene beam column. We also proved that  the eigenfunction corresponding to the minimal eigenvalue is positive and symmetric. For a certain range of lateral forces, we demonstrate the solvability of the general equation by using energy methods and a suitable iteration scheme.
\end{abstract}


\bigskip
\section{Introduction}
It is well-known from the materials science, physics, and chemistry literature, that there is intense interest in studying the mechanics of structures made of graphene. Potential industrial applications for graphene made  structures are abundant.  For instance, nanoscale devices that use graphene as basic components, such as resonators, switches, and valves, are being developed in many industries. Understanding the response of individual graphene structure elements to applied loads is therefore crucially important  (see [2], [5], [6], [7] and the references therein for a comprehensive list of applications). In this paper, we analyze the effects of axial compression and nonlinear lateral forces upon an idealized graphene beam. We prove the existence of a minimal "buckling load", which, mathematically speaking, is not obvious due to the structure of the constitutive law relating the stress and strain for a beam made of graphene. Furthermore, we prove the existence and uniqueness of solutions for the equilibrium equation of the elastoplastic beam when the lateral force satisfies a natural bound in terms of the elastoplastic parameters (and we prove non-existence, in certain cases, when this bound is not satisfied).
The Euler buckling of a simply supported straight elastic beam subject to an end axial compressive load   can be modeled by the equation:
\begin{equation} EI v''''+Pv''=0, 0<x<L
\end{equation}
with boundary conditions
\begin{equation} v(0)=v(L)=v''(0)=v''(L)
 \end{equation}
where  $L$  is the length of the beam,   $E$ is Young's modulus, and $I$ is the area moment of inertia.
Integrating (1.1) twice gives
\begin{equation} EIv''+Pv=0 \end{equation}
when the last two boundary conditions from (1.2) are taken into account.
Therefore the boundary value problem (1.1)-(1.2) reduces to the well-known eigenvalue problem for the Laplacian in one dimension:
\begin{equation} EIv''+Pv=0\end{equation}
\begin{equation} v(0)=v(L)=0\end{equation}
As is well known, system (1.4)-(1.5) yields a sequence of eigenvalues  and eigenfunctions, providing the following buckling modes and the corresponding buckling loads $$v_{k}(x)=\text{sin}(\frac{k\pi x}{L}),  P_{k}=EI(\frac{k\pi}{L})^2, k=1,2,3,...$$
Furthermore, each eigenfunction of (1.4)-(1.5) is simple. The first buckling mode $v_{1}$ corresponds to the well-known Euler critical buckling load  $P_{1}=EI(\frac{\pi}{L})^2$, which is sometimes called the onset buckling load. This is used widely in engineering practice for designing weight supporting columns of elastic materials.
\par\
The above Euler critical buckling load is derived based on Hooke's law, relating the axial stress $\sigma_{x}$ and the axial strain $\epsilon_{x}$ by $\sigma_{x}=E\epsilon_{x}$  and the assumption that during the deformation, the cross-sections of the beam column remain perpendicular to the beam's center line.  This classical result is generalized in [12] for Hollomon's law $\sigma_{x}=K|\epsilon_{x}|^{n-1}\epsilon_{x},$ where equation (1.1) is replaced by:
\begin{equation} (KI_n|v''|^{n-1}v'')''+P v''=0 \end{equation}
\begin{equation}  v(0)=v(L)=v''(0)=v''(L) \end{equation}
and the critical load of (1.6)-(1.7)  is given by: $$P_{cr}=\frac{2n(\pi_{2,1+1/n})^2}{n+1}K I_{n}$$
where $I_{n}=\int_{A}|y|^{1+n}dydz$  is the generalized area moment of inertia,  and $\pi_{2,1+1/n}=2\int_{0}^{\frac{\pi}{2}} cos(\theta)^{\frac{n-1}{n+1}}d\theta$ is a generalized Pi.  The first eigenfunction is defined in terms of the generalized sine function associated with $\pi_{2,1+1/n}$ by using the notation of the two parameter function developed in [13].
Graphene material has been shown to be modeled by the following quadratic stress-strain constitutive law (see [3], [4], [6], [8], [9], [10]):
\begin{equation} \sigma_{x}=E\epsilon_{x}+D|\epsilon_{x}|\epsilon_{x}, \end{equation}
where D is related to the Young's modulus by the relation $D=-\frac{E^2}{4\sigma_{\max}}$ called the effective nonlinear (third-order) elastic modulus [3] , and $\sigma_{\max}$  is the material's ultimate maximal shear stress. For small strain, the elastic stress $E\epsilon_x$  dominates (1.8), while the plastic stress $D|\epsilon_x|\epsilon_x$    becomes prominent for large strain. Notice that the ratio $|\frac{D}{E}|=\frac{E}{4\sigma_{\max}}$ is the elastoplastic parameter which we will use in our asymptotic analysis of Section 2.  When this parameter is small then the material's ultimate maximal shear stress $\sigma_{\max}$  is very large, and the elastic behavior dominates.
The equilibrium equation for a grapheme made Euler-Beam subject to axial compressive load $P,$  lateral force $f,$  and a nonlinear support $g$ (all per unit length) is given by the fourth order equation:
\begin{equation} EI v'''' +DI_{2}(|v''|v'')''+Pv'' +g(v'')=f(x), 0<x<L, \end{equation}
where $I=\int_{A} y^2 dy dz,$   $I_{2}=\int_{A} |y|^3 dy dz,$  and the z-axis being the off-plane direction and $A$  is the cross sectional area of the beam. We consider (1.9) along with one of the pin-pin (PP), and pin-slide (PS), or the slide-slide (SS) boundary conditions:
\begin{equation} \text{(PP Conditions)  } v(0)=v(L)=v''(0)=v''(L)=0 \end{equation}
\begin{equation}\text {(PS Conditions)  } v'(0)=v(L)=v'''(0)=v''(L)=0 \end{equation}
\begin{equation} \text{(SS Conditions)  } v'(0)=v'(L)=v'''(0)=v'''(L)=0 \end{equation}
Using the non-dimensional variables and parameters:
$$z=xL^{-1}, u=vL^{-1}, \alpha= \frac{|D|I_{2}}{EIL}, \lambda=\frac{PL^2}{EI}, \hat{g}(u'')=\frac{g(u'')}{EIL^{-3}}, \hat{f}(z)=\frac{f(z)}{EIL^{-3}}, $$
equation (1.9) can be rewritten as:
\begin{equation} u''''-\alpha (|u''|u'')''+\lambda u'' +\hat{g}(u'')=\hat{f}(z).\end{equation}
The boundary conditions (1.10)-(1.12) become:
\begin{equation} \text{(PP Conditions)  } u(0)=u(1)=u''(0)=u''(1)=0 \end{equation}
\begin{equation}\text {(PS Conditions)  } u'(0)=u(1)=u'''(0)=u''(1)=0 \end{equation}
\begin{equation} \text{(SS Conditions)  } u'(0)=u'(1)=u'''(0)=u'''(1)=0\end{equation}
In the next section we will study a special case of (1.13):
\begin{equation} u''''-\alpha(|u''|u'')''+\lambda u'' =0\end{equation}
with the boundary condition
\begin{equation} u(0)=u(1)=u''(0)=u''(1)=0 \end{equation}
Here (1.17)-(1.18) represents the buckling problem for a Euler graphene beam which replaces problems (1.1)-(1.2) and (1.6)-(1.7) for the elastic and Hollomon beams, respectively. In Section 2, we provide an asymptotic expansion of the first eigen-pair of (1.17)-(1.18) in terms of the perturbation parameter $\alpha,$ and prove that, for small enough $\alpha$, each eigen-pair is simple and continuously dependent upon $\alpha$, and we also establish the existence of an infinite sequence of eigen-pairs of (1.17)-(1.18). In section 3, we show that all the eigenvalues of (1.17)-(1.18) are positive and derive a lower bound for the smallest eigenvalue. This lower bound gives us a lower bound to the critical buckling load of the graphene beam
$$P_{cr}\ge \frac{\pi^2EI}{2L^2}$$
which means that the buckling load for the graphene beam column is no less then the Euler-buckling load.
In sections 4 and 5, we consider the global existence and uniqueness of solutions for the boundary value problem equations (1.13) -(1.14), for the case of PP conditions. Similar techniques are valid for the other boundary conditions. This way, we extend the results established in [14] and [15] to the graphene beam with nonlinear support.

\section{ Existence of Eigen-pairs and Buckling Analysis of the Graphene Beam}

We first present some formulations equivalent to the boundary value problem (1.17) and (1.18).
Integrating (1.17) twice, and applying the last two boundary conditions of (1.18) we obtain the nonlinear eigenvalue problem:
\begin{equation} u''-\alpha |u''|
u''+\lambda u=0,  \end{equation}
\begin{equation} u(0)=u(1)=0, \end{equation}
which we call the u-formulation. Another formulation of this problem is obtained upon using the substitution:
$$ v=u''$$
which turns (1.17) and (1.18) into the boundary value problem:
\begin{equation} (v-\alpha |v|
v)''+\lambda v=0,  \end{equation}
\begin{equation} v(0)=v(1)=0, \end{equation}
which we call the v-formulation.
\begin{defn}
If $\lambda,v$ solve (2.3)-(2.4) with $\lambda\in \mathbb{R}$, $v\in H^{1}(0,1)$, with $v(0)=v(1)=0,$ and $v\ne 0$  in the weak sense then, $(\lambda,v)$ is an eigenvalue and eigenfunction of (2.3)-(2.4).
\end{defn} 
Consider an eigen-value problem in the form:
\begin{equation}w''+ \lambda g'(w)=0 \end{equation}
\begin{equation} w(0)=w(1)=0\end{equation}
 The following theorem follows from Proposition 44.35 of Zeidler, 1985, [1]:
\begin{thm}

Suppose that $g:\R\to\R$ is continuously differentiable function, with $g(0)=0$ and $g'(w)w>0$ for all real numbers $w\ne 0$, and there exists constants $c,d>0$ such that following growth condition holds for all $w\in\R$: $|g(w)|\le c(1+|w|^2),|g'(w)|\le d(1+|w|)$. Then problem (2.5),(2.6) has infinitely many eigen-solutions $(\lambda_i,w_i)$, with $w_i\ne 0,\lambda_i >0$ for all $i\in Z^{+}$ such that
$w_i\rightharpoonup 0$ in $H^1_0(0,1)$ as well as $\lambda_i\to \infty$ as $i\to \infty$.
\end{thm}
\begin{proof} To prove existence of eigen-solutions to problem (2.3) and (2.4), we show that it can be transformed into an equivalent quasi-linear problem like (2.5)-(2.6). For this purpose, we define the derivative of a continuously differentiable even function $g$ by letting it's derivative to be
 \begin{equation}g'(w)=\begin{cases} q^{-1}(w),|w|\le \frac{1}{4\alpha}\\
\frac{1}{2\alpha}sign (w), |w|>\frac{1}{4\alpha} \end{cases}\end{equation}
where $q(v)=v-\alpha|v|v,|v|\le \frac{1}{2\alpha}$, $w=q(v)$.
\end{proof}
\begin{prop} Let $g'$ be the function defined by (2.7), then problem (2.5),(2.6) has infinitely many eigen-solution $(\lambda_i, w_i)$, with $w_i\ne 0,\lambda_i >0$ for all $i\in Z^{+}$ such that
$w_i\rightharpoonup 0$  in $H^1_0(0,1)$ as well as $\lambda_i\to \infty$ as $i\to \infty$.
\end{prop}
\begin{proof}
By definition, $g':\R\to\R$ is odd, continuous, and it satisfies $g'(0)=0$, $g'(w)w>0$ for all real numbers $w\ne 0$. Let $g(w)=\int_0^wg'(z)dz$, then $g(0)=0$ and $g$ is an even function. By explicitly solving $g$, one can verify that the growth conditions are also satisfied. Therefore, the Theorem follows from Theorem 2.2.
\end{proof}
\begin{thm}
Problem (2.3),(2.4) has infinitely many eigen-solution $(\lambda_i, v_i)$, with $v_i\ne 0,$  $v_i \in H^{1}_0(0,1)$, and $\lambda_i >0$, for all $i\in Z^{+}$.
\end{thm}
\begin{proof}
This follows from   $v_i=q^{-1}(w_i)$ and the above Proposition.
\end{proof}

\section{ A Lower Bound of the Eigenvalues and Some Properties of the Eigenfunctions}

In this section we verify some properties of the eigenvalues, prove the existence of a minimumal (first) buckling load, and derive a lower bound of the smallest eigenvalue.
Physically, it makes sense that there be a minimal positive eigenvalue--a so-called critical buckling load. We also give a-priori estimates on the eigenfunctions in the $W^\infty$ norm.

One technicality which gives us a little bit of trouble is that an eigenfunction $v$ is not necessarily smooth in $(0,1).$ In fact, if it has interior zeros it cannot be smoother than of class $W^{2,\infty}(0,1)$ near those zeros (due to the presence of the $|v|$ in our equation). Nonetheless, away from the zeros of a continuous eigenfunction, and away from points where $|v|=\frac{1}{2\alpha},$ the eigenfunction must be smooth.
This can be proved using the same techniques as are used in the regularity part of the proof of theorem 4.1 of section 4.
Note that Theorem 2.2 gives us the existence of $H^2$ eigenfunctions for (3.1)-(3.2).  We first show that there exists a lower bound for all the eigenvalues:
 \begin{thm}
If $\lambda$ is an eigenvalue of (2.3)-(2.4) then $\lambda\geq \frac{\pi^2}{2}.$
\end{thm}
\begin{proof} Proof. Let $w$ be an eigenfunction associated to $\lambda$. Let $v=g'(w)$, then
by definition, $g'(w)=q^{-1}(w)$, therefore  $w=q(v)$.   Solving the quadratic equation, we get $v=\frac{1-\sqrt{1-4\alpha w}}{2\alpha}=\frac{2w}{1+\sqrt{1-4\alpha w}}$ for $0\le v\le\frac{1}{2\alpha}$ and $v=\frac{-1+\sqrt{1+4\alpha w}}{2\alpha}=\frac{2w}{1+\sqrt{1+4\alpha w }}$ for $ -\frac{1}{2\alpha}\leq v$. In either case, we have $|g'(w)|=|v|\le 2|w|$. Since
$$\int_0^1|w'|^2dx=\lambda\int_0^1g'(w)wdx, $$
We have
$$\int_0^1|w'|^2dx \le 2\lambda\int_0^1w^2dx$$
which implies $\pi^2 \le \frac{\int_0^1|w'|^2dx }{\int_0^1w^2dx}\le 2\lambda$ by the Poincar\'{e} inequality.  Therefore,  $\frac{\pi^2}{2}\le \lambda$.
The theorem is proved.
\end{proof}
A-priori, we don't know that there exists a smallest eigenvalue from the previous result. In the following theorem, we will prove the existence of a minimal eigenvalue for (2.3)-(2.4).
\begin{thm}
There exists a minimal eigenvalue $\lambda^*$ of (2.3)-(2.4) satisfying $\lambda^*\ge \frac{\pi^2}{2}$.
\end{thm}
Indeed, take a (decreasing) minimizing sequence of eigenvalues $\lambda_{k}\rightarrow \lambda^*$ where $\lambda^*= \inf \{\lambda: \text{$\lambda$ is an eigenvalue of (3.1)-(3.2)}\}.$ We will show that $\lambda^*$ is an eigenvalue of $(3.1)-(3.2).$  Now, by Theorem 3.2, the associated eigenfunctions, satisfy:$|v_k| \leq \frac{1}{2\alpha}, \,\,\, |v_k'|\leq \frac{\lambda_k}{2\alpha}.$ Since $\lambda_k$ are uniformly bounded, we see that $v_k$ are uniformly bounded in $W^{1,\infty}$ this implies that (upon passing to a subsequence) $v_k$ converges strongly to some $v$ in $W^{1-\epsilon,2}$ for all $\epsilon>0.$ By the Sobolev imbedding theorems and trace theorem we see that $v\in H^{1}_0(0,1)$ and $v_k\rightarrow v$ uniformly. This implies that $|v_k|v_k\rightarrow |v|v$ uniformly. Hence, $v\in H^1_0$ is a weak solution of: $(v-\alpha v|v|)''=-\lambda^* v$ and by the previous theorem, we have$\lambda^*\ge \pi^2$.

Since $\lambda=\frac{PL^2}{EI}$, this Theorem gives us a lower bound on smallest  the buckling load $P_{cr}$ for the graphene beam as $\frac{\pi^2 EI}{L}$, which is the Euler-buckling load for the corresponding linear beam model.
In the following, we show some regularity and properties of the eigenfunctions.
\begin{lemm}
If $v(x)\not=\frac{1}{2\alpha}$ then $v$ is smooth in a neighborhood of $x.$
\end{lemm}
This lemma is a consequence of standard elliptic regularity theory. One can mimic the arguments in Section 4, for example.
\begin{thm}
Let $v$ be a continuous eigenfunction of (2.3)-(2.4) corresponding to an eigenvalue $\lambda$. Then $|v|\le\frac{1}{2\alpha}$ on $[0,1]$.
\end{thm}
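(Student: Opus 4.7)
The plan is to rule out $|v|>\frac{1}{2\alpha}$ by a maximum-principle argument applied pointwise at an interior extremum of $v$. Since $-v$ also solves (3.1)--(3.2) (the nonlinearity $\alpha|v|v$ is odd in $v$), it suffices to show that $\max_{[0,1]} v \le \frac{1}{2\alpha}$; the lower bound follows by symmetry.

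Suppose, for contradiction, that $M := \max_{[0,1]} v > \frac{1}{2\alpha}$. Since $v$ is continuous and vanishes at the endpoints, this maximum is attained at some $z^\ast \in (0,1)$. The regularity remark preceding the theorem tells us that $v$ is smooth away from the sets $\{v=0\}$ and $\{|v|=\frac{1}{2\alpha}\}$; in a neighborhood of $z^\ast$ one has $v > \frac{1}{2\alpha}$, so this neighborhood avoids both singular sets and $v$ is classically smooth there. Since $v>0$ on this neighborhood, $|v|v = v^{2}$, and (3.1) expands to
\[
(1 - 2\alpha v)\, v'' - 2\alpha (v')^{2} + \lambda v = 0.
\]
Evaluating at $z^\ast$ and using $v'(z^\ast) = 0$ yields $(1 - 2\alpha M)\, v''(z^\ast) = -\lambda M$. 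The coefficient $1-2\alpha M$ is strictly negative while $-\lambda M \le 0$, with strict inequality whenever $\lambda > 0$; hence $v''(z^\ast) \ge 0$, in fact $v''(z^\ast) > 0$ when $\lambda > 0$. This contradicts the necessary condition $v''(z^\ast) \le 0$ at an interior maximum of a smooth function, finishing the case $\lambda > 0$.

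The degenerate case $\lambda = 0$ can be handled either by invoking Proposition 3.1 (every eigenvalue is strictly positive, so no eigenfunction is associated to $\lambda=0$) or directly: if $\lambda = 0$, then $w := v - \alpha|v|v$ satisfies $w'' = 0$ with $w(0)=w(1)=0$, forcing $w \equiv 0$ and hence $v(z) \in \{0, \pm 1/\alpha\}$ for every $z$; continuity of $v$ together with $v(0)=0$ then gives $v \equiv 0$, which is not an eigenfunction. The only delicate point I anticipate is justifying the smoothness of $v$ at $z^\ast$ so that the classical extremum condition and the pointwise expansion of the equation are legitimate; this is handled by noting that the open set $\{v > \frac{1}{2\alpha}\}$ stays away from both singular loci of the nonlinearity, so a purely local smoothness statement suffices and no global regularity of the eigenfunction is needed.
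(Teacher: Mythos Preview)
Your proof is correct and follows essentially the same maximum-principle argument as the paper: assume an interior maximum with $v(z^\ast)>\frac{1}{2\alpha}$, use local smoothness to expand (3.1), set $v'(z^\ast)=0$, and derive $v''(z^\ast)>0$ from the sign of $1-2\alpha v(z^\ast)$, contradicting the second-order condition at a maximum. Your treatment is slightly more careful than the paper's in that you explicitly dispose of the borderline case $\lambda=0$ (either via Proposition~3.1 or by the direct integration you indicate), whereas the paper simply invokes $\lambda>0$.
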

 \begin{proof}Suppose that the conclusion of the theorem is false. Then without loss of generality we may assume that $v$ has a maximum at $x=c$ and that $v(c)>\frac{1}{2\alpha}$.  Now, because $v$ has a local maximum at $c,$ $v''(c)\le 0$ and $v'(c)=0$.  Expanding equation (3.1) gives:
\begin{equation}
v''(1-2\alpha|v|)-2\alpha\frac{vv'^2}{|v|}+\lambda v=0.
\end{equation}
Letting $x=c$ and noting that $\lambda>0$ gives a contradiction.
\end{proof}
\begin{lemm}
Suppose that $v(a)=0$ for some $a\in [0,1]$ and suppose that $b$ is the first number larger than $a$ for which $v(b)=0$. Then, $$|v'|\leq \frac{\lambda}{2\alpha}.$$
\end{lemm}
Without loss of generality, we may assume that $v> 0$ in $(a,b)$. In this case, $$v''(1-2\alpha v)-2\alpha v'^2+\lambda v=0$$ except, perhaps, at the points where $|v|=\frac{1}{2\alpha}.$ Now, it must be that $v'(a)>0$ since otherwise $v'(a)=0$ in which case the standard uniqueness theorem for ODE's will imply that $v\equiv 0$. Similarly $v'(b)<0$. Thus, $v''(a)>0$ and $v''(b)>0$. Hence, $v'$ must achieve its maximum inside $(a,b).$ Say the maximum is achieved at $c$. Now, $v(c)\not=\frac{1}{2\alpha}$ because otherwise $v'(c)=0$ and $v'$ couldn't achieve its max at $c$. Hence $v$ is infinitely differentiable at $c$. Then, $v''(c)=0$ and $$v'(c)^2=\frac{\lambda}{2\alpha}v(c)\leq\frac{\lambda^2}{(2\alpha)^2} \implies v'(c)\leq \frac{\lambda}{2\alpha}. $$ Now there are two cases: $v'$ achieves its minimum in $(a,b)$ or $v'$ achieves its minimum at $b.$ If the minimum is achieved in the interior then it must be bounded from below by $\frac{-\lambda}{2\alpha}$. In this case $$|v'|\leq \frac{\lambda}{2\alpha}.$$ On the other hand, by integrating (2.3) from $a$ to $b$ we see (using that $v(a)=v(b)=0$) that: $v'(b)-v'(a)=-\lambda\int_{a}^b v(x)dx.$
This implies that $v'(b)=v'(a)-\lambda\int_{a}^{b}v(x)dx\geq -\frac{\lambda}{2\alpha}.$
Thus, in all cases:$|v'|\leq \frac{\lambda}{2\alpha}$
\begin{cor} Let $(\lambda,v)$ be an eigen-solution to (2.3)-(2.4), then
$$|v'(x)|\leq \frac{\lambda}{2\alpha},  \forall x \in [0,1].$$ Consequently,
$$|v''(x)|\leq \frac{\lambda(\lambda+1)}{2\alpha\big|1-2\alpha|v(x)|\big |},\forall x \in [0,1]$$
\end{cor}
It is well-known that for the corresponding problem with $\alpha=0$, the eigenfunction is unique up to a sign difference. However, we are able to show that for the case  $\alpha\ne 0$, the following theorem
We now show that
\begin{thm} Let $w_1$ and $w_2$ be two eigenfunctions associated with
The  eigen-function associated with the smallest eigenvalue $\lambda^*$ of (3.1)-(3.2). If $||w_1-w_2||_{1,\infty}\le \frac{1+1/\pi}{2\alpha}$, then  $w_1=w_2$.
\end{thm}
This is a special case of proposition 5.1 below.

\begin{thm} Let $\lambda^*$ be the smallest eigenvalue and let $$S^*=\{w:-w''=\lambda^*g'(w).w(0)=w(1)=0\}.$$ Then, for every $w\in S$ $$w(a)>0,\,\,\, \forall \,a\in(0,1).$$Furthermore, every $w\in S^*$, satisfies $w'(1/2)=0$ and $w(x)$ is symmetric about $x=1/2$.
\end{thm}

\begin{proof} If $w^*(a)=0$ for some $w^*\in S^*$, and $0<a<1$.  Let $w_a(x)=w^*(ax),x\in[0,1]$, then $w_a(0)=w_a(1)=0$, and $-w_a''(x)=-a^2w*''(ax)=-a^2\lambda^*g'(w^*(ax))=
-a^2\lambda^*g'(w_a(x))$. Therefore, $w_a$ is an eigenfunction with eigenvalue $\lambda=a^2\lambda^*$ which is a contradiction to the fact $\lambda^*$ is the smallest among all members of $S^*$ satisfying $w(a)\ne 0$ for $0<a<1$.
Let $v(x)=w(1-x)$, then $v'(x)=-w'(1-x)$ gives $v'(0)=-w'(1)$.  Multiply both sides of $-w''=\lambda g'(w)$ by $w'$, the integrate both sides using $g'(0)=0,w(0)=0$, we get $w'^2(x)=2\lambda g(w(x))+(w'^2(0)-\frac{\lambda}{6\alpha})$. Setting $x=1$, we have $w'^2(1)=w'^2(0)$ Taking $w(x)\in\S^*,w'(0)=1$, then
$w'(1)=-1$ and $v'(0)=-w'(1)=1$. We have shown that $v$ a solution to the initial value problem $v''=-\lambda g'(v),v(0)=0,v'(1)=-1$. Since $g'(w)$ is Lipschitz continuous, the initial value problem has a unique solution, we conclude that $w(x)=v(x)=w(1-x), \forall x \in[0,1]$.
\end{proof}\break
\noindent For each $w\in S^*$, define $$w_2(x)=\begin{cases}w(2x), &0\le x\le 1/2,\\-w(2(1-x)), &1/2\le x\le 1.\end{cases}$$ \break
\noindent Let $\lambda_2=4\lambda^*$, then $w_2$ is an eigenfunction with eigenvalue $\lambda_2$. This procedure can be used to generate a sequence of eigenvalue
$\lambda_k=k^2\lambda^*$ with eigenfunction $w_k$ defined similarly for $k=2,3,...$. 
\begin{cor} Let $\lambda^*$ be the smallest eigenvalues of (2.3)-(2.4), then
$\{\lambda_k=k^2\lambda^*, k=1,2,...\}$ is a sequence of eigenvalues.
\end{cor}

\section{ Approximation of the Eigen-values}
When $\alpha=0$, (2.1)-(2.2) reduces to the eigenvalue problem for the Euler elastic beam:
\begin{equation}u''+\lambda u=0 \end{equation}
\begin{equation} u(0)=u(1)=0\end{equation}
whose eigenpairs are given by:
\begin{equation} \lambda_{k}=(k\pi)^2, u_{k}=sin(\pi k z), k=1,2,3,...\end{equation}
In particular this linear problem has a discrete spectrum and each eigen-value is simple.
Consider the nonlinear graphene operator defined by:
$$N_{G}(\alpha,u)=u''-\alpha|u''|u'',u\in H^{2}(0,1)\cap H^{1}_{0}(0,1)$$
Ideally, we would like to prove that $N_{G}$ has a discrete spectrum. The next proposition is a first step in this direction. We show that for each eigenvalue $\lambda_{k}$ of the linear operator (the Laplacian) there exists a continuously differentiable curve of eigenvalues to $N_{G}(\alpha,\cdot),$ for small $\alpha.$
The proof of these facts is based on the implicit function theorem as demonstrated below.

Note that we will use the notation $<f,g>$ to denote the $L^2$ inner product of $f$ and $g$:

$$<f,g>= \int_{0}^{1} f(x)g(x)dx$$

\begin{prop} For each eigenpair $(u_{1},\lambda_{1})$ of (2.5)-(2.6), there exists $\alpha_0>0$ so that there exists a unique smooth curve $(u(\alpha),\lambda(\alpha))$ of eigenpairs of $N_{G}(\alpha,u)$ defined for $\alpha \leq \alpha_{0}$ such that $\lambda(0)=\lambda_{1}$ and $u(0)=u_{1}.$
\end{prop}
Define $F:\mathbb{R}\times H^2 \cap H^{1}_{0}\times \mathbb{R} \rightarrow L^{1}\times \mathbb{R}$ in the following way:
$$F(\alpha,u,\lambda)=\big( u''-\alpha|u''|u''+ \lambda u,<u_{1}',u'>-<u_{1}',u_{1}'>\big)$$

$F$ is continuously differentiable and $F(0,u_{1},\lambda_{1})=(0,0).$
Now, we seek to prove that $F_{u,\lambda}(0,u_{1},\lambda_{1})=  \left[ {\begin{array}{lcl}{}
             ( \cdot)''+\lambda_{1}(\cdot) & u_{1} \\
             < u'_{1}, (\cdot)'> & 0 \\
                \end{array} } \right ]$ is invertible. It is clealy surjective.

Now, if $F_{u,\lambda}(0,u_{1},\lambda_{1}) \left[ {\begin{array}{lcl}{}
             u  \\
             \lambda \\
                \end{array} } \right]=0$ then $u$ and $\lambda$ have to satisfy the following system:
$$u''+\lambda_{1}u+\lambda u_{1}=0$$ and
$$<u_{1}',u'>=0.$$

Multiplying the first equation by $u_{1},$ integrating from $0$ to $1,$ and integrating by parts in the first term, and using the fact that $u''_1 +\lambda_{1}u_1 =0,$ we get:$\lambda \int_{0}^{1} u_{1}^2 dx=0$ so that $\lambda=0.$
Then the first equation becomes
$$u''+\lambda_{1}u=0.$$ But since $<u'_{1},
u'>=0$ and since $u_{1}$ is simple, $u\equiv 0.$ The proposition then follows from the implicit function theorem in Banach spaces.
We now seek to find an asymptotic expansion of the solution of (2.1)-(2.2) in powers of $\alpha$ .
The zeroth order boundary value problem is (2.5)-(2.6) whose solution is given by (2.7).
The first order equation then reads:
$$u''_{2}+\lambda_{1}u_{2}=|u''_{1}|u''_1-\lambda_{2}u_{1} $$
$$u_{2}(0)=u_{2}(1)=0 $$
whose solvability condition gives: $$\lambda_{2}=-\frac{\int_{0}^{1}|u''_{1}|u''_1u_1dz}{\int_{0}^{1}|u_{1}|^2dz}$$
In this way we obtain an asymptotic expansion:
$$u(z)=u_{1}(z)+\alpha u_{2}(z)+O(\alpha^2)$$
$$\lambda=\lambda_{1}+\alpha \lambda_{2} +O(\alpha^2)$$
valid for small enough $\alpha$, where $u_{2}$ is the unique solution of the first order problem above.

\section{Existence and Uniqueness of the Beam with Nonlinear Support}

In this section we want to prove existence and uniqueness of solutions for the elastic beam equations with compression below the first buckling load, with a nonlinear foundational support, and subject to a mild external force. In the next section we will show that the conditions we assume to prove existence and uniqueness are more or less optimal.
Consider the following non-linear elliptic boundary value problem:
\begin{equation} ((1-2\alpha|v|) v')'+\lambda v+ g(v)=f, \, \, \text{in} \, \, (0,1) \end{equation}
\begin{equation} v(0)=v(1)=0\end{equation}
with $\alpha\geq 0,$ $\lambda < \frac{\pi^2}{2},$ and $f$ is a bounded function. Furthermore, $g$ is a differentiable function which is homogeneous of degree 2 or more and satisfies the following inequalities:
$$tg(t)\leq 0, g'(t)\leq 0 \, \, \text{for all}\, \, t.$$
The main result of this section is that if $f$  is small enough in $L^{2}(0,1)$, then (4.1)-(4.2) has a unique $H^2$ solution. Moreover, we show by example, that our result is in some sense optimal: if $f$ is positive and large enough then no solution exists.
We prove the uniqueness before we prove existence.
We prove that if we have two solutions of (4.1)-(4.2) which are both small enough then the two solutions must coincide.
Define the following classes of functions:
$$B_{\delta} \equiv \{k\in W^{1,\infty}(0,1): \, |k|_{W^{1,\infty}}\leq \delta \}$$
\begin{prop}
If $\delta<\frac{(1+\frac{1}{\pi})}{2\alpha},$ then (4.1)-(4.2) has at most one weak solution in $B_{\delta}.$
\end{prop}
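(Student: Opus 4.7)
The plan is the standard energy/monotonicity uniqueness argument: suppose $v_1, v_2 \in B_\delta$ are both weak solutions of (4.1)--(4.2), set $w = v_1-v_2$, subtract the two weak formulations, and test against $w$ itself. The key algebraic identity is
\[
(1-2\alpha|v_1|)v_1' - (1-2\alpha|v_2|)v_2' \;=\; (1-2\alpha|v_1|)\,w' \;-\; 2\alpha\bigl(|v_1|-|v_2|\bigr)\,v_2',
\]
which splits the nonlinear diffusion into a principal contribution carrying $|w'|^2$ and a cross term of lower order. After integration by parts and cancellation of the forcing $f$, this produces
\[
\int_0^1 (1-2\alpha|v_1|)|w'|^2\,dx = 2\alpha\!\int_0^1 (|v_1|-|v_2|)\,v_2'\,w'\,dx + \lambda\!\int_0^1 w^2\,dx + \int_0^1 \bigl(g(v_1)-g(v_2)\bigr)w\,dx.
\]
The hypothesis $g'(t)\leq 0$ combined with the mean value theorem gives $(g(v_1)-g(v_2))w = g'(\xi)w^2 \leq 0$, so that term is non-positive and can simply be discarded to yield an inequality.

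Next I would estimate the remaining right-hand side. Membership in $B_\delta$ gives $\|v_2'\|_\infty \leq \delta$, and the reverse triangle inequality gives $\big||v_1|-|v_2|\big| \leq |w|$ pointwise a.e. Together with Cauchy--Schwarz and the Poincar\'e inequality $\|w\|_{L^2} \leq \pi^{-1}\|w'\|_{L^2}$, this produces
\[
2\alpha\!\int_0^1 (|v_1|-|v_2|)\,v_2'\,w'\,dx \;\leq\; \frac{2\alpha\delta}{\pi}\int_0^1 |w'|^2\,dx, \qquad \lambda\!\int_0^1 w^2\,dx \;\leq\; \frac{\lambda}{\pi^2}\int_0^1 |w'|^2\,dx.
\]
On the left-hand side one uses the lower bound $1-2\alpha|v_1| \geq 1-2\alpha\delta$. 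Collecting the three bounds and rearranging yields an inequality of the form
\[
\Bigl(1 - 2\alpha\delta\bigl(1+\tfrac{1}{\pi}\bigr) - \tfrac{\lambda}{\pi^2}\Bigr)\!\int_0^1 |w'|^2\,dx \;\leq\; 0.
\]
The smallness hypothesis on $\delta$ in the statement, coupled with $\lambda<\pi^2/2$, is precisely what makes this coefficient positive, which forces $w'\equiv 0$, and then the Dirichlet condition $w(0)=w(1)=0$ gives $w\equiv 0$.

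The delicate point will be justifying these manipulations at the level of weak solutions, since the coefficient $|v|$ is only Lipschitz and fails to be $C^1$ at zeros of $v$. This is what the hypothesis $v_i \in B_\delta \subset W^{1,\infty}$ is for: the products $(1-2\alpha|v_i|)v_i'$ and $(|v_1|-|v_2|)v_2'$ all lie in $L^\infty$, every integral above is well defined, and the integration by parts is the usual $H^1_0$ pairing. In particular, no pointwise differentiation of $|v|$ is ever required, and one never needs to locate the zero set of $v_i$.
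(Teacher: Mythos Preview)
Your approach is essentially the paper's: subtract the two equations, test against the difference $w=v_1-v_2$, decompose the nonlinear flux, bound the cross term via $\bigl||v_1|-|v_2|\bigr|\le|w|$ together with the $W^{1,\infty}$ bound, and finish with Poincar\'e. The only bookkeeping difference is that you keep $(1-2\alpha|v_1|)|w'|^2$ intact on the left, whereas the paper moves $2\alpha|v_1||v'|^2$ to the right; this is cosmetic and yields the same estimate. Your handling of the $\lambda$-term is in fact cleaner than the paper's, which writes $+|\lambda|\int|v|^2$ on the left as though $\lambda\le0$.

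There is, however, one genuine slip in your last paragraph. Your final inequality
\[
\Bigl(1 - 2\alpha\delta\bigl(1+\tfrac{1}{\pi}\bigr) - \tfrac{\lambda}{\pi^2}\Bigr)\int_0^1 |w'|^2\,dx \;\le\; 0
\]
forces $w\equiv0$ only when $\delta < \dfrac{1-\lambda/\pi^2}{2\alpha\,(1+1/\pi)}$. Your assertion that the stated hypothesis $\delta<(1+1/\pi)/(2\alpha)$ ``is precisely what makes this coefficient positive'' is false: take $\lambda=0$ and $\delta$ just below that threshold, so that $2\alpha\delta(1+1/\pi)\approx(1+1/\pi)^2>1$ and the coefficient is negative. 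This is not a defect in your argument but a typo in the proposition: the factor $(1+1/\pi)$ is printed on the wrong side of the fraction bar, and indeed the paper's own proof concludes with the condition $\delta<(1+1/\pi)^{-1}\,\dfrac{1}{2\alpha}$. You should record the threshold your computation actually produces rather than silently accept the one mis-stated in the proposition.
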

Suppose that $v_{1},v_{2} \in B_{\delta}$ solve (1.1)-(1.2).
Then, $v=v_1 - v_2$ satisfies the following equation:
$$v'' + \lambda v +g(v_1)-g(v_2)= 2\alpha[((|v_1|-|v_2|)| v'_1)'- (|v_1|v')'] .$$
Now multiply by $v$ and integrate by parts. Since $v=0$ at 0 and 1, all the boundary terms vanish and we get:
\begin{equation}\begin{array}{lcl} \int_{0}^1 | v'|^2 +|\lambda|\int_{0}^1 |v|^2 -\int_{0}^{1} (g(v_{1})-g(v_{2}))(v_1 -v_2) \\
\leq 2\alpha [ \int_{0}^1 |v||  v'_1||v'|+ |v_1||v'|^2] \end{array} \end{equation}
where we used $$||v_1|-|v_2||\leq |v_1-v_2|$$
Because $g'\leq 0,$ we have that $(g(v_{1})-g(v_{2}))(v_1 -v_2) \leq 0,$ so we can drop the last term on the left hand side of (4.3).
By the Poincar\'{e} inequality, we have  $|v|_{L^2} \leq \frac{1}{\pi}|v'|_{L^2}.$ This and using the fact that $|v_1|,|v'_1| \leq \delta,$ we see that
$$| v'|_{L^2} \leq 2\alpha \delta(1+\frac{1}{\pi})| v'|_{L^2}$$
Therefore, if $\delta <(1+\frac{1}{\pi})^{-1} \frac{1}{2\alpha},$  then $v'\equiv 0$ and, using the boundary condition, the uniqueness theorem is proven.
The proof of existence will rely upon energy estimates and a suitable iteration scheme.
We will begin by proving the existence of a small solution in $H^{1}_{0}$ under a suitable condition on $f.$ 
In (4.1)-(4.2), we write $v=\frac{1}{2\alpha} w$ and $F=2\alpha f.$ Then we get that $v$ is a solution of (4.1)-(4.2) if and only if $w$ is a solution of:
\begin{equation}  ((1-|w|) w')'+ \lambda w + 2\alpha g(\frac{1}{2\alpha} w)=F, \, \, \text{in} \, \, (0,1)
\end{equation}
\begin{equation} w(0)=w(1)=0. \end{equation}
Recall that $\lambda < \frac{\pi^2}{2}$ and $G(\cdot):= 2\alpha g(\frac{1}{2\alpha} \cdot)$ satisfies the same conditions as $g.$

The main idea we want to use is that if $F$ is smooth and small enough, then, using the maximum principle, $w$ must also be small. Once $w$ is small, the equation becomes uniformly elliptic and we will then be able to deduce the existence and uniqueness of a small solution. We now prove existence of an $H^{1}$ weak solution.

\begin{prop}

Let $h$ be a bounded, measurable function with $|h|\leq \frac{1}{2}.$ Then if $w$ solves the following semi-linear boundary-value problem
\begin{equation} ((1-|h|)w')'+{\lambda} w +G(w)=F, \, \, \text{in} \, \, (0,1) \end{equation}
\begin{equation} w(0)=w(1)=0, \end{equation}
with $tG(t)\leq 0,$ for all $t$. Assume further that $\lambda < \frac{\pi^2}{2}.$ Then $$|w'|_{L^2} \leq \frac{1}{\pi(\frac{1}{2} - \frac{\lambda}{ \pi^2} )}|F|_{L^2}$$
\end{prop}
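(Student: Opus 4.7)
The plan is to derive the estimate through a direct energy argument: test the equation against $w$, integrate by parts once, and then absorb the right-hand side using the Poincar\'e inequality together with the sign condition on $G$ and the lower bound $1-|h|\geq \tfrac{1}{2}$.

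Concretely, I would first multiply (4.6) by $w$ and integrate over $(0,1)$. Since $w(0)=w(1)=0$, integration by parts on the principal term gives
\begin{equation*}
\int_0^1 (1-|h|)\,(w')^2\,dx \;=\; \lambda\int_0^1 w^2\,dx \;+\; \int_0^1 G(w)\,w\,dx \;-\; \int_0^1 F w\,dx.
\end{equation*}
The hypothesis $|h|\leq \tfrac{1}{2}$ gives $1-|h|\geq \tfrac{1}{2}$, so the left-hand side controls $\tfrac{1}{2}|w'|_{L^2}^2$ from below. The structural assumption $tG(t)\leq 0$ makes $\int_0^1 G(w)w\,dx \leq 0$, so this term may simply be dropped.

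Next I would apply the Poincar\'e inequality $|w|_{L^2}\leq \tfrac{1}{\pi}|w'|_{L^2}$ on the two remaining right-hand side terms: the first produces $\tfrac{\lambda}{\pi^2}|w'|_{L^2}^2$, and by Cauchy--Schwarz the forcing term is bounded by $\tfrac{1}{\pi}|F|_{L^2}|w'|_{L^2}$. Collecting everything yields
\begin{equation*}
\left(\tfrac{1}{2}-\tfrac{\lambda}{\pi^2}\right)|w'|_{L^2}^2 \;\leq\; \tfrac{1}{\pi}\,|F|_{L^2}\,|w'|_{L^2},
\end{equation*}
and the hypothesis $\lambda<\tfrac{\pi^2}{2}$ makes the coefficient on the left strictly positive, so I can divide through to obtain the claimed estimate.

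There is no real obstacle here; the only point requiring mild care is the justification of the integration by parts, since $h$ is merely bounded and measurable, so $(1-|h|)w'$ is only in $L^2$ rather than anything smoother. This is handled in the standard way by interpreting (4.6) in the weak sense and using $w\in H^1_0(0,1)$ directly as a test function. After that, the argument is a clean three-line energy estimate, and the numerical threshold $\lambda<\pi^2/2$ appears precisely as the condition needed for the quadratic form on the left to remain coercive once the $\tfrac{1}{2}$ from the ellipticity loss is combined with the $\tfrac{1}{\pi^2}$ from Poincar\'e.
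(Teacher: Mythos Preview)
Your proposal is correct and follows essentially the same approach as the paper: test the equation against $w$, integrate by parts, use $1-|h|\geq\tfrac12$ and $tG(t)\leq 0$ to simplify, then apply Poincar\'e and Cauchy--Schwarz to close the estimate. The only addition is your remark on interpreting the integration by parts weakly, which is a sensible clarification the paper omits.
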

Multiply (4.6) by $w$ and integrate from $0$ to $1.$ Upon integrating by parts we see
\begin{equation} \int_{0}^{1}(1-|h|)|w'|^2 dz-\lambda\int_{0}^{1} w^2 dz -\int_{0}^{1} G(w)wdz = -\int_{0}^{1}Fw dz \end{equation}
Using the condition on $h$ and that $tG(t)\leq 0,$ we see that $$\frac{1}{2} \int_{0}^{1} |w'|^2 dx \leq|\int_{0}^{1}Fw dx|  + \lambda\int_{0}^{1}|w|^2 dx. $$
Now, using the best constant in the Poincar\'{e} inequality on $[0,1],$ we know that $$\int_{0}^{1} |w|^{2} dz \leq \frac{1}{\pi^2} \int_{0}^{1} |w'|^2 dz$$
This implies that $$\frac{1}{2} \int_{0}^{1} |w'|^2 dz \leq|\int_{0}^{1}Fw dz|  + \frac{{\lambda}}{\pi^2}\int_{0}^{1}|w'|^2 dz $$
Since, by assumption, ${\lambda} < \frac{\pi^2}{2},$
$$(\frac{1}{2} - \frac{\lambda}{ \pi^2} )\int_{0}^{1} | w'|^2 dz \leq  |\int_{0}^{1}Fu dz| $$
Using the Cauchy-Schwarz inequality and the Poincar\'{e} inequality once more we see:
$$|w'|_{L^2} \leq \frac{1}{\pi(\frac{1}{2} - \frac{\lambda}{ \pi^2} )}|F|_{L^2} $$
We now need the fact that $H^{1}_{0}$ is imbedded in $L^\infty$ in dimension one:
$$|f|_{L^\infty} \leq |f'|_{L^2} $$ for all $f\in H^{1}_{0}(0,1).$ This is just a consequence of the Cauchy-Schwarz inequality. Therefore,
$$|w|_{L^\infty} \leq \frac{1}{\pi(\frac{1}{2} - \frac{\lambda}{\pi^2} )}|F|_{L^2}$$
One may also try to prove this proposition using the maximum principle by seeing that the condition on $h$ implies that the ellipticity constant of our equation is $1-|h|\geq \frac{1}{2}.$ However we wanted a simple way to get exact constants in our bounds. Now assume $|F|_{L^2}\leq \frac{\pi(\frac{1}{2} - \frac{\lambda}{ \pi^2})}{2}$ and  define the following sequence of functions $w_{n}:$
$$w_0 =0, ((1-|w_{n-1}|)w'_n)'+w_n+G(u_{n})=F, \, \, \text{in} \, \, (0,1) $$
$$ w_n (0)=w_n (1)=0 $$
Using the theory of semi-linear elliptic equations in one dimension, we see that the sequence $w_{n}$ can be defined for all $n$ (see, for example, [9]). Moreover, by Proposition 5.2, $|w_n|\leq \frac{1}{\pi(\frac{1}{2} - \frac{\lambda}{ \pi^2} )} |F|_{L^2}$ and  $|w_n|_{H^1} \leq  \frac{2}{\pi(\frac{1}{2} - \frac{\lambda}{ \pi^2} )}|F|_{L^2}$ for all $n.$
Thus the sequence $w_n$ is uniformly bounded in $H^1 \cap L^\infty.$ Thus we may extract a subsequence of $w_n$ which converges weakly in $H^1 ,$ strongly in $L^p,$ for some $p>2$ and pointwise to a function $w \in H^1 \cap L^\infty.$ Moreover, $|w|_{H^1} \leq \frac{2}{\pi(\frac{1}{2} - \frac{\lambda}{ \pi^2} )}|F|_{L^2}$ and $|w|_{L^\infty} \leq \frac{1}{\pi(\frac{1}{2} - \frac{\lambda}{ \pi^2} )} |F|_{L^\infty}.$
Therefore, $w$ is a bounded weak solution of (4.6)-(4.7).
We now want to show that $w,$ in fact, belongs to $H^2$ with a certain smallness estimate. We aim to show that $w\in H^2(0,1)$ with an appropriate bound. We will first show that $w-\frac{|w|w}{2} \in H^2(0,1).$ Notice that $(w-\frac{|w|w}{2})''=((1-|w|)w')'.$ Therefore, we can write or equation as
$$(w-\frac{|w|w}{2})''=H$$ where $H$ is an $L^2$ function  ($H=F-{\lambda}w-G(w)$). Using standard elliptic theory,  $w-\frac{|w|w}{2} \in H^{2}(0,1).$
Let $v=w-\frac{w|w|}{2}.$ Define the function $\Phi$ with $\Phi(x)=x-\frac{1}{2}x|x|.$ Now, $\Phi$ is not invertible on the whole real line. However, it is invertible on $|x|\leq \frac{1}{2}.$ From Theorem 3.2,  we have  $|w|\leq \frac{1}{4},$ $\Psi$ is well-defined. Noting that $\Phi'(x)=1-|x|,$ we see that $\Phi$ is invertible for $|x|\leq \frac{1}{2}.$ Call the inverse $\Psi.$ By the inverse function theorem, $|\Psi'|\leq 2.$ In fact, $\Psi \in W^{2,\infty}$. So, $\Psi(v)=w.$  Now we want to transfer our regularity estimate for $v$ to a regularity estimate for $w.$ This follows by the chain rule in Sobolev spaces.
Now that $w\in H^2,$ we can perform the following estimates: Take the equation $$((1-|w|)w')'+{\lambda}w+G(w) = F$$ and multiply by $((1-|w|)w')'$ then integrate from 0 to 1.
Recall that $G'(t)\leq 0.$ Then we see, upon integration by parts in the second and third terms,
$$\begin{array}{lcl}\,\int_{0}^{1} |((1-|w|) w')'|^2 dz -\lambda\int_{0}^{1}|w'|^2(1-|w|)dz+\int_{0}^{1} G'(w)|w'|^2(1-|w|)dz\\=\int_{0}^{1} F((1-|w|)w')'dz\end{array}  $$
Therefore, $$\begin{array}{lcl}\,\int_{0}^{1} |(w-\frac{|w|w}{2})''|^2 dz -{\lambda}\int_{0}^{1}|w'|^2(1-|w|)dz-\int_{0}^{1} G'(w)|w'|^2(1-|w|)dz\\=\int_{0}^{1} F(w-\frac{|w|w}{2})''dz \end{array} $$
Using the Cauchy-Schwarz inequality,
$$\int_{0}^{1} |(w-\frac{|w|w}{2})''|^2 dz -{\lambda}\int_{0}^{1}|u'|^2dz\leq \frac{1}{2} \int_{0}^{1} F^2+|(w-\frac{|w|w}{2})''|^2dz  $$
$$ \int_{0}^{1} |(w-\frac{|w|w}{2})''|^2 dz \leq  \int_{0}^{1} F^2 dz+2\lambda\int_{0}^{1}|w'|^2dz  $$
Now recall that ${\lambda}\leq \frac{\pi^2}{2}$ and $|w'|_{L^2}^2 \leq(\frac{1}{\pi(\frac{1}{2} - \frac{\lambda}{ \pi^2} )})^2 |F|_{L^2}^2 $.
Therefore, $$ \int_{0}^{1} |(w-\frac{|w|w}{2})''|^2 dz \leq (1+(\frac{1}{(\frac{1}{2} - \frac{\lambda}{ \pi^2} )})^2) \int_{0}^{1} F^2 dz  $$
So, $$ \int_{0}^{1} |v''|^2 dz \leq (1+(\frac{1}{(\frac{1}{2} - \frac{\lambda}{ \pi^2} )})^2) \int_{0}^{1} F^2 dz $$ and $w=\Psi(v).$ By simple calculations, $|w''|_{L^2} \leq 2(|v''|_{L^2}+|v_{z}^2|_{L^2}).$
Therefore, $$ |w|_{H^2} \leq 2 (\sqrt{(1+(\frac{1}{(\frac{1}{2} - \frac{\lambda}{ \pi^2} )})^2})|F|_{L^2}+(1+(\frac{1}{(\frac{1}{2} - \frac{\lambda}{ \pi^2} )})^2)|F|_{L^2}^2)$$
Thus we have proven the following theorem:
\begin{thm}
Let $\alpha>0$ be given. Suppose that $g$ is a continuous function on the real line which is homogeneous of degree 2 or more. Suppose further that $tg(t)\leq 0$ and $g'(t)\leq 0$ for all $t.$ Suppose ${\lambda}\leq \frac{\pi^2}{2}.$ Then there exists $c_{1}>0$ small (explicitly given below) so that if $f$ is a measurable $L^2$ function on $[0,1]$ with $|f|_{L^2}\leq \frac{c_{1}}{\alpha},$ then the following non-linear boundary-value problem has a unique solution belonging to $H^2.$
\begin{equation} ((1 -2\alpha|v|) v')' +\lambda v+g(v)=f \end{equation}
\begin{equation} v(0)=v(1)=0. \end{equation}
Moreover, there exists a constant $c_{2}$ so that $|v|_{H^2}\leq \frac{c_{1}}{c_{2}}|f|_{L^2}.$ Here,
$c_{1}= \frac{\pi(\frac{1}{2} - \frac{\lambda}{\pi^2})}{2}, $ and
$c_{2}=4\sqrt{(1+(\frac{1}{(\frac{1}{2} - \frac{\lambda}{\pi^2} )})^2}).$
\end{thm}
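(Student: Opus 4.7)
The plan is to construct a small $H^2$ solution via a Picard-type iteration built on Proposition 4.2, upgrade its regularity by exploiting a nonlinear change of variables peculiar to one dimension, and obtain uniqueness from Proposition 4.1. Throughout I would work with the rescaled equation obtained by setting $v = w/(2\alpha)$, $F = 2\alpha f$, namely
$$((1-|w|)w')' + \lambda w + G(w) = F, \qquad w(0)=w(1)=0,$$
where $G(t) := 2\alpha g(t/(2\alpha))$ inherits the sign conditions $tG(t)\leq 0$ and $G'(t)\leq 0$ from $g$.

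I would then iterate: set $w_0 \equiv 0$ and, given $w_{n-1}$ with $|w_{n-1}|_{L^\infty}\leq \tfrac12$, define $w_n$ as the $H^1_0$ weak solution of the semi-linear equation obtained by freezing the coefficient at $h := w_{n-1}$; existence at each step is standard one-dimensional semilinear theory. Proposition 4.2 applies uniformly and yields
$$|w_n|_{H^1} \leq \tfrac{2}{\pi(\frac12-\lambda/\pi^2)}|F|_{L^2}, \qquad |w_n|_{L^\infty}\leq \tfrac{1}{\pi(\frac12-\lambda/\pi^2)}|F|_{L^2}.$$
Choosing $c_1 := \pi(\frac12-\lambda/\pi^2)/2$ and assuming $|f|_{L^2}\leq c_1/\alpha$ (i.e.\ $|F|_{L^2}\leq c_1$), these bounds inductively preserve $|w_n|_{L^\infty}\leq \tfrac12$, so the iteration is well-defined for every $n$. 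By Rellich compactness in dimension one, a subsequence converges weakly in $H^1$ and uniformly to some $w\in H^1_0\cap L^\infty$ obeying the same bounds, and a routine limiting argument in the weak formulation (using the uniform convergence of $|w_{n-1}|\to|w|$ and continuity of $G$) shows that $w$ is a bounded weak solution.

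The main obstacle is then to promote $w$ from $H^1$ to $H^2$: since the coefficient $1-|w|$ is only Lipschitz, standard elliptic regularity does not directly yield $H^2$. The key observation is the one-dimensional identity $(1-|w|)w' = (w - \tfrac12 |w|w)'$, which rewrites the equation as
$$\bigl(w - \tfrac12 |w|w\bigr)'' = F - \lambda w - G(w) \in L^2(0,1),$$
showing that $v := \Phi(w)$ with $\Phi(x) := x - \tfrac12 x|x|$ lies in $H^2$. Because $|w|\leq \tfrac12$ and $\Phi'(x) = 1-|x| \geq \tfrac12$ on $[-\tfrac12,\tfrac12]$, the map $\Phi$ is bilipschitz there with inverse $\Psi \in W^{2,\infty}$ satisfying $|\Psi'|\leq 2$, and the chain rule in Sobolev spaces then gives $w = \Psi(v) \in H^2$.

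To quantify this, I would test the equation against $v'' = ((1-|w|)w')'$, integrate by parts in the remaining terms (using $G'\leq 0$ and $1-|w|\geq \tfrac12$), and apply Cauchy--Schwarz together with the $H^1$ bound already established to obtain
$$|v''|_{L^2}^2 \leq \Bigl(1+\bigl(\tfrac{1}{\frac12-\lambda/\pi^2}\bigr)^2\Bigr)|F|_{L^2}^2.$$
Converting this back via $w = \Psi(v)$ and $|\Psi'|\leq 2$ and de-rescaling produces the stated $|v|_{H^2}\leq (c_1/c_2)|f|_{L^2}$ with the explicit $c_2$. Uniqueness is then immediate: the $H^2$ bound combined with the one-dimensional embedding $H^2 \hookrightarrow W^{1,\infty}$ places the constructed $v$ inside the ball $B_\delta$ for some $\delta < (1+1/\pi)/(2\alpha)$ once $c_1$ is chosen small enough, so Proposition 4.1 rules out any second solution in that ball.
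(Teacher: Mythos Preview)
Your proposal is correct and follows essentially the same route as the paper: the rescaling $w=2\alpha v$, the frozen-coefficient iteration controlled by Proposition~4.2, compactness to get a bounded weak solution, the one-dimensional identity $(1-|w|)w'=(w-\tfrac12|w|w)'$ to promote regularity via the bilipschitz map $\Phi$, the $H^2$ estimate obtained by testing against $((1-|w|)w')'$, and finally uniqueness from Proposition~4.1. The only cosmetic difference is that the paper states uniqueness first and existence afterward, but the ingredients and their use are identical.
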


\section{Nonexistence for Large External Force}

\begin{prop} Consider the system (4.9)-(4.10). Take $\lambda=0$ and $g=0.$ Then there exists a universal constant $c_{3}>0$ so that if we take $f\equiv \frac{c}{\alpha},$ for $c> c_{3},$ then there exists no solution to $(4.9)-(4.10).$
\end{prop}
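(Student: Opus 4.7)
The plan is to exploit the fact that when $\lambda = 0$ and $g = 0$ equation (4.9) can be integrated exactly twice, reducing the ODE to a pointwise algebraic constraint $\Phi(v(x)) = V(x)$ which is incompatible with continuity for large $c$. I would define $\Phi(v) := v - \alpha v|v|$; a direct computation gives $\Phi'(v) = 1 - 2\alpha|v|$, so by the Sobolev chain rule (valid since $\Phi \in C^1$ with Lipschitz derivative on bounded sets) one has $(1 - 2\alpha|v|)v' = (\Phi(v))'$ for any $v \in H^2(0,1)$. The equation then reads $(\Phi(v))'' = f$, and the boundary conditions $v(0) = v(1) = 0$ force $\Phi(v(0)) = \Phi(v(1)) = 0$. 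With $f \equiv c/\alpha$, integrating twice yields the pointwise identity
\begin{equation*}
\Phi(v(x)) = V(x), \qquad V(x) := -\tfrac{c}{2\alpha}\,x(1-x),
\end{equation*}
so $V \leq 0$ on $[0,1]$ with $\min V = V(1/2) = -c/(8\alpha)$.

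Next, I would analyze the shape of $\Phi$ to see how small $V$ can be made while still being in the range of $\Phi$ along a continuous path starting at $0$. For $v \leq 0$, $\Phi(v) = v(1 + \alpha v)$ attains its global minimum $-\frac{1}{4\alpha}$ at $v = -\frac{1}{2\alpha}$, so $\Phi \geq -\frac{1}{4\alpha}$ on $(-\infty, 0]$. For $v \geq 0$, $\Phi(v) = v(1 - \alpha v)$ vanishes at $v = 0$ and $v = 1/\alpha$, is positive in between, and is strictly decreasing past $v = 1/\alpha$. Solving $\Phi(v) = -\frac{1}{4\alpha}$ for $v > 0$ gives $v = \frac{1+\sqrt{2}}{2\alpha}$; hence the condition $\Phi(v) < -\frac{1}{4\alpha}$ forces $v > \frac{1+\sqrt{2}}{2\alpha} > 1/\alpha$.

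Taking $c_3 := 2$ and assuming $c > c_3$ gives $V(1/2) < -\frac{1}{4\alpha}$, so by the previous paragraph any putative solution must satisfy $v(1/2) > 1/\alpha$. But $v \in H^2(0,1) \subset C([0,1])$ with $v(0) = 0$, so the intermediate value theorem produces $x_0 \in (0, 1/2)$ with $v(x_0) = 1/\alpha$. At that point $\Phi(v(x_0)) = \Phi(1/\alpha) = 0$, while the identity above forces $\Phi(v(x_0)) = V(x_0) < 0$ since $x_0 \in (0,1)$. This contradiction finishes the proof with the explicit universal constant $c_3 = 2$.

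I do not anticipate any serious obstacle; the only technical subtlety is the Sobolev chain rule for the piecewise-smooth nonlinearity $\Phi$, which is routine because $\Phi \in C^1(\mathbb{R})$ with derivative that is Lipschitz on every bounded set. Everything else is a one-dimensional calculation and an invocation of the intermediate value theorem.
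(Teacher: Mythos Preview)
Your proof is correct and follows essentially the same route as the paper: integrate twice to reduce the equation to the pointwise identity $v-\alpha|v|v=-\tfrac{c}{2\alpha}\,x(1-x)$ and then extract a contradiction at $x=\tfrac12$. The only difference is cosmetic---the paper first argues that $v<0$ on $(0,1)$ and then observes that the resulting quadratic for $v(\tfrac12)$ has negative discriminant, whereas you locate the unique real preimage on the positive axis and invoke the intermediate value theorem; both yield $c_{3}=2$ (the paper's stated $c_{3}=1$ comes from a dropped factor of $\tfrac12$ in the double integration).
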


Problem (4.9)-(4.10) reduces to $$ (1-\alpha|v|v)''=\frac{c}{\alpha},$$ $$v(0)=v(1)=0.$$
Integrating twice and using the boundary condition yields $$ v-\alpha|v|v=\frac{c}{\alpha} z(z-1)$$
Factoring we get:
$$v(1-\alpha|v|)=\frac{c}{\alpha}z(z-1).$$
Since the right hand side is never zero in (0,1), the left hand side can never be zero either. Therefore, $v$ is either positive or negative in $(0,1).$
Moreover, since $v(0)=0,$  $(1-2\alpha|v|)>0$ for $z$ close to 0. The right hand side is negative in $(0,1).$ Therefore $v$ is negative for $z>0$ small. Therefore $v$ is negative in the entire interval $(0,1)$. Therefore, $$ v+\alpha v^2=\frac{c}{\alpha} z(z-1)$$ Take $z=\frac{1}{2}.$ Then,$$ v(\frac{1}{2})+\alpha v^2 (\frac{1}{2})=-\frac{1}{4}\frac{c}{\alpha}$$
So, if $c>1,$ we see that the discriminant of this equation is negative so that no solutions exist.
Note that in the case that $\lambda=0,$ $c_{1}=\frac{\pi}{4}<2$ so that if the external force is less than $\frac{\pi}{8\alpha}$ in $L^2,$ then we have existence and uniqueness of an $H^2$ solution. Moreover, we have an example of an external force larger than $\frac{1}{\alpha}$ in $L^2$ so that there exists no $H^1$ solution to (4.9)-(4.10). Thus our result in theorem 3.3 is, essentially, optimal both in the mathematical and physical sense. Physically, this says that for a small enough lateral force we have a smooth deformation, but for a large lateral force--'small' and 'large' being determined by the basic physical constants in the system such as the maximal strain $\sigma_{\max},$ there is no smooth deformation.
\section{Acknowledgements}
This research was done while D. Wei  was visiting Texas A\&M University in Qatar in summer 2013 and he acknowledges the gracious support of TAMUQ and the Qatar Foundation. T. Elgindi was supported by NSF grant no. 1211806 during the completion of this research.


\begin{thebibliography}{99}
\bibitem{Zeidler} E. Zeidler, Nonlinear Functional Analysis and its Applications III, page 337, Springer-Verlag, 1985.
\bibitem{ Novoselov} K. S. Novoselov,V.I.Falko, L. Colombo, P. R. Gellert, M. G. Schwab,and  K. Kim,  \emph{A roadmap for graphene},     Nature 490, 192-200  (2012)
\bibitem{Cadelano} E. Cadelano, P. L. Palla, S. Giordano, and L. Colombo,  \emph{ Nonlinear Elasticity of Monolayer Graphene}, Phys. Rev.Lett., PRL 102, 235502-4 (2009)
\bibitem{Malina} E. W. Malina, Mechanical behavior of atomically thin graphene sheets using atomic force microscopy nanoindentation, Master's thesis, University of Vermont, 2011.
\bibitem {Zhu} Y. Zhu, S. Murali, W. Cai, X. Li, J. W. Suk, J. R. Potts, R. S. Ruoff, \emph{Graphene and graphene oxide: Synthesis, properties, and applications},Advanced Materials 35, (2010) 3906-3924.
\bibitem{Galiotis}C. Galiotis, \emph{Mechanical properties of graphene}, Graphene 2020 Brussels (2011).
\bibitem{Kim}M. T. Kim, \emph{Influence of substrates on the elastic reaction of films for
the microindentation tests}, Thin Solid Films 283, (1996) 12-16.
\bibitem{Lee}C. Lee, X. Wei, J. W. Kysar, J. Hone, \emph{Measurement of the elastic properties and intrinsic strength of monolayer graphene}, Science 321,(2008) 385-388.
\bibitem{Lee2}C. Lee, X. Wei, Q. Li, R. Carpick, J. W. Kysar, J. Hone,\emph{ Elastic and frictional properties of graphene}, Physica Status Solidi(B) 246, (2009)2562-2567.
\bibitem{Sansoz} F. Sansoz, T. Gang, \emph{A force-matching method for quantitative hardness measurements by atomic force microscopy with diamond-tipped sapphire cantilevers}, Ultramicroscopy 111, (2010) 11-19.
\bibitem{ Badiale} M. Badiale, E. Serra, Semilinear Elliptic Equations for Beginners: Existence Results via the Variational Approach, Universitext Series, Springer (2011).
\bibitem{Wei} D.Wei,A. Sarria, M. Elgindi, Critical buckling loads of the perfect Hollomon's power-law columns, Mechanics Research Communications,  47, (2013) 69-76.
\bibitem{Dr{a}bek} P. Dr\'abek and R. Man\'asevich, On the global solution to some nonhomogeneous eigenvalue problems with p-Laplacian, Differential Integral Equations, 12, (1999), 773-788.
\bibitem{ Elgindi1} M. Elgindi, D. Wei, On the global solvability of the equilibrium equations of supported Hollomon beam subject to axial load, Applied Mathematical Sciences, 6(120), (2012) 5981-5992.
\bibitem{Elgindi2} Elgindi, M.B.M., Yen, D.H.Y., \emph{On the existence of equilibrium states of an elastic beam on a nonlinear foundation}, Internat. J. Math.\& Math. Sci., 16(1),(1998)193-198.
\end{thebibliography}
\end{document}